\newlength\myinner
\newlength\myouter
\definecolor{aqua}{rgb}{0.0, 1.0, 1.0}
\definecolor{aquamarine}{rgb}{0.5, 1.0, 0.83}
\definecolor{alizarin}{rgb}{0.82, 0.1, 0.26}
\definecolor{carmine}{rgb}{0.59, 0.0, 0.09}
\definecolor{tred}{rgb}{0.8, 0.0, 0.13}
\definecolor{blue(pigment)}{rgb}{0.2, 0.2, 0.6}
\definecolor{darkgreen}{rgb}{0.0, 0.39, 0.0}
\definecolor{Crimson}{HTML}{DC143C}
\definecolor{Maroon}{HTML}{800000}
\definecolor{MidnightBlue}{HTML}{191970}
\definecolor{Peru}{HTML}{CD853F}
\definecolor{Teal}{HTML}{008080}
\definecolor{FloralWhite}{HTML}{FFFAF0}
\definecolor{LightYellow}{HTML}{FFFFE0}
\definecolor{Burgundy}{HTML}{9C001A}
\theoremstyle{plain}
\newtheorem{theorem}{Theorem}
\newtheorem{lemma}{Lemma}
\theoremstyle{definition}
\theoremstyle{remark}
\newtheorem{remark}{Remark}
\newcommand\E{\mathbb{E}}
\renewcommand{\P}{\mathbb{P}}
\newcommand{\parens}[1]{\left(#1\right)}
\newcommand{\brackets}[1]{\left[#1\right]}
\newcommand{\set}[1]{\left\{#1\right\}}
\newcommand{\abs}[1]{\left|#1\right|}
\newcommand{\tuple}[1]{\left\langle#1\right\rangle}
\newcommand{\dotProduct}[2]{\left\langle#1,#2\right\rangle}
\newcommand{\Ac}{{\mathcal A}}
\newcommand{\Sc}{{\mathcal S}}
\newcommand{\Xc}{{\mathcal X}}
\title{\bf When do discounted-optimal policies also optimize the gain?}
\author{Victor Boone\footnote{victor.boone@univ-grenoble-alpes.fr}\\
  Univ. Grenoble-Alpes, Inria, CNRS, LIG, 38300 Grenoble, France}
\def\sp#1{{\rm sp}(#1)}
\def\Pr{\P}
\begin{document}

\maketitle

\begin{abstract}
    In this technical note, we establish an upper-bound on the threshold on the discount factor starting from which all discounted-optimal deterministic policies are gain-optimal, that we prove to be tight on an example. 
    To address computability issues of that theoretical threshold, we provide a weaker bound which is tractable on ergodic MDPs in polynomial time. 
\end{abstract}

\section{Concepts and main result}

We consider Markov decision processes (MDPs) with finitely many states and actions, whose sets are respectively denoted $\Xc$ and $\Ac = \bigcup_{x \in \Xc} \Ac(x)$. 
A MDP is given by a tuple $M = \tuple{\Xc, \Ac, p, q}$ where $p$ and $q$ are respectively the transition kernel and reward distributions.
The mean reward associated to $(x, a)$ is denoted $r(x,a)$, viz., $r(x, a) := \E_{R \sim q(x,a)}[R]$. 

A \emph{policy} $\pi \in \Pi$ is any deterministic stationary decision rule.
Upon iterating a policy $\pi$ on $M$ starting from $x \in \Xc$, we obtain a sequence of states, actions, and rewards $\set{(X_t, A_t, R_t) : t \ge 0}$ whose probability measure will be denoted $\Pr^\pi_x(-)$ and expectation operator $\E^\pi_x[-]$.
We will write $Z_t$ for the pair $(X_t, A_t)$. 
The iterates of a policy also define a Markov reward process (MRP) $\set{(X_t, R_t)}$ whose transition kernel will be denoted $P^\pi$ and mean reward vector $r^\pi$, i.e., $r^\pi(x) := r(x, \pi(x))$.
To each policy are associated various notions of scores:
\begin{itemize}
    \item the finite-horizon score ${\rm J}_T^\pi(x) := \E_x^\pi[\sum_{t=0}^{T-1} r(Z_t)]$; 
    \item the $\beta$-discounted score ${\rm V}_\beta^\pi(x) := \E_x^\pi[\sum_{t=0}^\infty r(Z_t) \beta^{t}]$ for $\beta \in [0, 1)$;
    \item the gain $g^\pi(x) := \lim_{T \to \infty} \E_x^\pi[\frac 1T \sum_{t=0}^{T-1} r(Z_t)]$;
    \item the bias $h^\pi(x) := \lim_{T \to \infty} \E_x^\pi[\sum_{t=0}^\infty (r(Z_t) - g^\pi(X_t))]$, or the Ces\'aro-limit when the limit doesn't exist.
\end{itemize}

Note that those quantities depend of the underlying MDP $M$ -- this $M$ will sometimes be added to notations to avoid ambiguities. 

A policy that achieves maximal $\beta$-discounted score from all state is said \emph{$\beta$-discounted optimal} and we write $\pi \in \Pi^*_\beta$.
A policy that achieves maximal gain from all state is said \emph{gain-optimal} and we write $\pi \in \Pi_{-1}^*$; if in addition it achieves maximal bias from all state, it is said \emph{bias-optimal} and we write $\pi \in \Pi_0^*$.
All these classes are non-empty \cite{puterman_markov_1994}. 
It is known \cite{blackwell_discrete_1962} that when $\beta \uparrow 1$, $\Pi_\beta^*$ eventually converges to a single class of policies known as \emph{Blackwell-optimal policies}, which is a non-trivial subset of bias-optimal policies \cite{puterman_markov_1994}. 

There are few explicit bounds on how large $\beta$ needs to be so that all $\beta$-discounted optimal policies are Blackwell-optimal \cite{grandclement2023reducing}.
For  a less demanding problem, and to the best of our knowledge, the question of finding a threshold on the discount factor $\beta_{-1}$ that  guarantees that every $\beta$-discounted optimal policy is  gain-optimal ({\it i.e.} in $\Pi^*_{-1}$) for all $\beta > \beta_{-1}$ has not been addressed so far.
This threshold is formally given by:

\begin{equation}
    \beta_{-1} := 
    \inf \set{
        \beta^* \ge 1 
        : \forall \beta \in (\beta^*, 1], ~\Pi_\beta^* \subseteq \Pi_{-1}^*
    }.
\end{equation}
\cref{theorem:main} establishes an upper-bound on $\beta_{-1}$.

If $u$ is a vector (e.g., $g^\pi$, $h^\pi, \ldots$), its \emph{span} is $\sp{u} := \max_x u(x) - \min_x u(x)$. 
We denote $g^*$ and $h^*$ the respective optimal gain and bias vectors, equal to  $g^{\pi^*}$ and $h^{\pi^*}$ respectively, where $\pi^*$ is any bias-optimal policy. 

\begin{theorem}
    \label{theorem:main}
    For all MDP with finitely many states $\Xc$ and actions, 
    \begin{equation}
        \label{equation:main}
        \beta_{-1} \le 1 - \inf \set{ \frac{g^*(x) - g^\pi(x)}{\sp{h^*} + \sp{h^\pi}} : x \in \Xc \text{ and $\pi \in \Pi$ s.t.~}g^\pi(x) < g^*(x)}.
    \end{equation}%
\end{theorem}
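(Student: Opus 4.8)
The plan is to argue by contraposition. Suppose $\pi$ is a $\beta$-discounted optimal policy that is not gain-optimal, and pick a state $x_0$ with $g^\pi(x_0) < g^*(x_0)$. I will show that $\beta$ cannot exceed the right-hand side of \eqref{equation:main}; the contrapositive then gives the theorem. The only genuine ingredient is a closed-form ``reduced'' expression for the discounted value of a policy in terms of its gain and bias.

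First I would record that reduced form. Fix any policy $\pi$ and recall ${\rm V}^\pi_\beta = (I - \beta P^\pi)^{-1} r^\pi$. Set $f := {\rm V}^\pi_\beta - \tfrac{1}{1-\beta}\, g^\pi$. Using $P^\pi g^\pi = g^\pi$ one gets $f = (r^\pi - g^\pi) + \beta P^\pi f$, hence $f = (I - \beta P^\pi)^{-1}(r^\pi - g^\pi)$. Substituting the bias (Poisson) equation $r^\pi - g^\pi = (I - P^\pi) h^\pi$ and the identity $(I - \beta P^\pi)^{-1}(I - P^\pi) = I - Q^\pi_\beta$, where $Q^\pi_\beta := (1-\beta)(I - \beta P^\pi)^{-1} P^\pi = (1-\beta)\sum_{m \ge 1} \beta^{m-1}(P^\pi)^m$, yields
\begin{equation*}
  {\rm V}^\pi_\beta \;=\; \tfrac{1}{1-\beta}\, g^\pi \;+\; h^\pi \;-\; Q^\pi_\beta h^\pi .
\end{equation*}
Because $Q^\pi_\beta$ is a stochastic matrix, every coordinate of $Q^\pi_\beta h^\pi$ lies between $\min_x h^\pi(x)$ and $\max_x h^\pi(x)$, so for each state $x$,
\begin{equation*}
  \Bigl| {\rm V}^\pi_\beta(x) - \tfrac{1}{1-\beta}\, g^\pi(x) \Bigr| \;\le\; \sp{h^\pi}.
\end{equation*}

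Now let $\pi^*$ be a bias-optimal policy, so that $g^{\pi^*} = g^*$ and $h^{\pi^*} = h^*$. Apply the last display as an upper bound for ${\rm V}^\pi_\beta(x_0)$ and as a lower bound for ${\rm V}^{\pi^*}_\beta(x_0)$, and use ${\rm V}^\pi_\beta(x_0) \ge {\rm V}^{\pi^*}_\beta(x_0)$ (valid since $\pi \in \Pi^*_\beta$):
\begin{equation*}
  \tfrac{1}{1-\beta}\, g^\pi(x_0) + \sp{h^\pi} \;\ge\; {\rm V}^\pi_\beta(x_0) \;\ge\; {\rm V}^{\pi^*}_\beta(x_0) \;\ge\; \tfrac{1}{1-\beta}\, g^*(x_0) - \sp{h^*}.
\end{equation*}
Since $g^*(x_0) - g^\pi(x_0) > 0$, this forces $\sp{h^*} + \sp{h^\pi} > 0$ and, after rearrangement, $1 - \beta \ge \bigl(g^*(x_0) - g^\pi(x_0)\bigr)\big/\bigl(\sp{h^*} + \sp{h^\pi}\bigr)$, a finite positive number that is one of the elements indexed in \eqref{equation:main}, hence at least $\inf\{\cdots\}$. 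So $\beta \le 1 - \inf\{\cdots\}$, and the contrapositive shows $\Pi^*_\beta \subseteq \Pi^*_{-1}$ for every $\beta$ above the right-hand side of \eqref{equation:main} -- which is exactly the claimed bound on $\beta_{-1}$. (When there is no admissible pair $(x,\pi)$, every policy is gain-optimal and the statement is vacuous.)

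There is no deep obstacle here; the two matrix identities are routine, and the only steps requiring care are the observation that $Q^\pi_\beta$ is stochastic -- this is what turns the bias correction into an error of size $\sp{h^\pi}$ rather than something unbounded -- together with the choice of a \emph{bias-optimal} comparison policy $\pi^*$ rather than a merely gain-optimal one, which is precisely what puts $\sp{h^*}$ (and not a possibly larger span) into the denominator. I expect deriving and recognizing the reduced form to be the main thing to get right; everything afterwards is a single comparison of values.
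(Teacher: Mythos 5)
Your proof is correct, and its final comparison step is exactly the paper's: bound $\abs{{\rm V}_\beta^\pi(x) - g^\pi(x)/(1-\beta)} \le \sp{h^\pi}$ for every policy, then play a bias-optimal $\pi^*$ against a gain-suboptimal $\pi$ at a state where the gains differ (you phrase it as a contrapositive, the paper argues directly, which is immaterial). The only genuine difference is how you obtain that key estimate, which is the paper's \cref{lemma:discounted}. You derive the exact resolvent identity ${\rm V}_\beta^\pi = \tfrac{1}{1-\beta}g^\pi + \parens{I - Q_\beta^\pi}h^\pi$ with $Q_\beta^\pi := (1-\beta)(I-\beta P^\pi)^{-1}P^\pi$ stochastic, using only $P^\pi g^\pi = g^\pi$ and the Poisson equation; this is a clean one-shot algebraic route (essentially a truncated Laurent-expansion argument with an explicit remainder). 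The paper instead goes through two steps: a finite-horizon bound $\abs{{\rm J}_T^\pi(x) - Tg^\pi(x)} \le \sp{h^\pi}$ (\cref{lemma:finite horizon}) followed by an Abel-summation identity writing $(1-\beta){\rm V}_\beta^\pi$ as a weighted average of the $\tfrac{1}{t+1}{\rm J}_{t+1}^\pi$. Both yield the same constant $\sp{h^\pi}$; your version is shorter and avoids the series manipulation, while the paper's version additionally records the finite-horizon lemma, which is of independent use. Your checks are the right ones: stochasticity of $Q_\beta^\pi$ is what caps the correction at $\sp{h^\pi}$, and choosing $\pi^*$ bias-optimal (not merely gain-optimal) is what makes the denominator $\sp{h^*} + \sp{h^\pi}$.
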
%
This result underlines a trade-off to be found between how close to gain-optimal $\pi$ is and how large its bias tends to be. 
This result follows from relatively folklore identities linking the gain and the discounted score. 
A complete proof is provided later in this note. 

The bound \eqref{equation:main} is tight. 
Consider the MDP with deterministic transitions pictured in \cref{figure:dmdp}.
There is a single action from states $2, 3$ and two from $1$ (left or right). 
The associated mean rewards are represented as arc weights. 

\begin{figure}[h]
    \centering
    \begin{tikzpicture}
        \node[circle, draw] (1) at ( 0, 0) {$0$};
        \node[circle, draw] (2) at ( 3, 0) {$1$};
        \node[circle, draw] (3) at (-3, 0) {$2$};
        \draw[->, >=stealth] (1) to node[midway, above] {$1$} (2);
        \draw[->, >=stealth] (1) to node[midway, above] {$1 + \epsilon_h - \epsilon_g$} (3);
        \draw[->, >=stealth] (3) to[in=90,out=180-45] (-4, 0) node[left] {$1 - \epsilon_g$} to[in=180+45,out=-90] (3);
        \draw[->, >=stealth] (2) to[in=90,out=+45] (4, 0) node[right] {$1$} to[in=-45,out=-90] (2);
    \end{tikzpicture}
    \caption{
        \label{figure:dmdp}
        A deterministic MDP achieving the bound of \cref{theorem:main} ($\epsilon_g, \epsilon_h > 0$). 
    }
\end{figure}
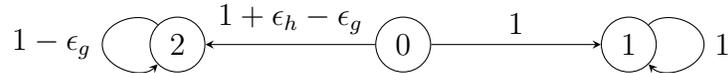

The policy going to $i \in \set{1, 2}$ from $0$ is denoted $\pi_i$.
Only $\pi_1$ is gain-optimal.
\begin{equation*}
    \begin{array}{llll}
        \pi_1: & {\rm V}_\beta^{\pi_1}(0) = (1 - \beta)^{-1}, & g^{\pi_1}(0) = 1, & h^{\pi_1} = (0, 0, 0); \\
        \pi_2: & {\rm V}_\beta^{\pi_2}(0) = (1 - \epsilon_g)(1 - \beta)^{-1} + \epsilon_h, & g^{\pi_1}(0) = 1 - \epsilon_g, & h^{\pi_2} = (0, \epsilon_h, 0).
    \end{array}
\end{equation*}
From the values above, we derive that 
$$
    {\rm V}_\beta^{\pi_1}(0) > {\rm V}_\beta^{\pi_2}(0)
    \iff 
    \beta > 1 - \frac{\epsilon_g}{\epsilon_h} = 1 - \frac{g^{\pi_1}(0) - g^{\pi_2}(0)}{\sp{h^{\pi_1}} + \sp{h^{\pi_2}}}.
$$

\section{A tractable lower bound for ergodic MDPs}

The inf-bound in \cref{theorem:main} is hard to compute because one needs to check all suboptimal policies. 
Our second result is a weaker version of \eqref{equation:main} for \emph{ergodic} MDPs that can be computed in polynomial time.
Recall that a MDP is \emph{ergodic} if $P^\pi$ is ergodic for all $\pi$ -- this condition is not easy to check in general, and is usually decided by the problem that the MDP models. 

\begin{theorem}
    \label{theorem:ergodic}
    For all ergodic MDP with finitely many states $\Xc$ and actions, 
    \begin{equation}
        \beta_{-1} \le 1 - \frac{\Delta_g}{2\sp{r}\overline{D}}
    \end{equation}
    where $\Delta_g := \inf\set{g^*(x) - g^\pi(x) : x \in \Xc \text{ and } g^\pi(x) < g^*(x)} > 0$ is the \emph{gain-gap} of the MDP 
    and $\overline{D} := \max_\pi \max_{x \ne y} \E_x^\pi[\tau_y]\footnote{$\tau_y$ is the reaching time of $y$, starting from the initial state: $\tau_y = \inf\{ t\geq 0 : X_t = y\}$.} < \infty$ is the \emph{worst diameter}.
\end{theorem}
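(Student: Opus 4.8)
The plan is to derive Theorem~\ref{theorem:ergodic} as a corollary of Theorem~\ref{theorem:main} by bounding each of the three quantities that appear in \eqref{equation:main} in terms of $\Delta_g$, $\sp r$, and $\overline D$. Concretely: for ergodic MDPs the gain vectors $g^\pi$ and $g^*$ are constant (each MRP has a single recurrent class covering all states), so the numerator $g^*(x)-g^\pi(x)$ in \eqref{equation:main}, whenever it is nonzero, is bounded below by $\Delta_g$. It therefore remains to bound the denominator $\sp{h^*}+\sp{h^\pi}$ from above by $2\sp r\,\overline D$; in fact I would prove the cleaner statement that $\sp{h^\pi}\le \sp r\cdot\overline D$ for every policy $\pi$, which then gives the factor $2$ after summing the two terms.

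The key step is thus a span bound on the bias of a single ergodic MRP. For an ergodic MRP with kernel $P^\pi$, mean reward $r^\pi$ and constant gain $g^\pi$, the bias has the probabilistic representation
\begin{equation*}
    h^\pi(x) = \E_x^\pi\Big[\textstyle\sum_{t=0}^{\tau_y - 1}\big(r^\pi(X_t) - g^\pi\big)\Big] + h^\pi(y)
\end{equation*}
for any reference state $y$ (this is the standard first-return / Poisson-equation identity for ergodic chains). Choosing $y$ to be a state where $h^\pi$ is minimal, and $x$ where it is maximal, we get
\begin{equation*}
    \sp{h^\pi} = h^\pi(x) - h^\pi(y) = \E_x^\pi\Big[\textstyle\sum_{t=0}^{\tau_y-1}\big(r^\pi(X_t)-g^\pi\big)\Big].
\end{equation*}
Now $|r^\pi(X_t)-g^\pi|\le \sp r$ because $g^\pi$, being a convex combination (the stationary average) of values of $r^\pi$, lies between $\min r^\pi$ and $\max r^\pi$, so each summand has absolute value at most $\sp r$. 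Hence $\sp{h^\pi}\le \sp r\cdot\E_x^\pi[\tau_y]\le \sp r\cdot\overline D$ by definition of the worst diameter. Applying this to both $\pi$ and a bias-optimal $\pi^*$ and plugging into \eqref{equation:main} yields $\beta_{-1}\le 1 - \Delta_g/(2\sp r\,\overline D)$. I would also remark in passing why $\Delta_g>0$ (finitely many policies, hence finitely many gain values) and why $\overline D<\infty$ (ergodicity makes every expected hitting time finite, finitely many policies).

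The main obstacle is getting the first-return representation of the bias stated and justified cleanly: one must be careful that $h^\pi$ as \emph{defined} in the excerpt (a Ces\`aro limit of $\sum_t (r(Z_t)-g^\pi(X_t))$) coincides with the unique solution of the Poisson equation $h^\pi + g^\pi = r^\pi + P^\pi h^\pi$ normalized by, say, $\mu^\pi h^\pi = 0$, and that this solution admits the stopped-sum form above — this is classical for ergodic chains but deserves a precise citation to \cite{puterman_markov_1994} and a one-line argument that the stopped sum is finite and does not depend on the normalization (only its span is used). A secondary, purely bookkeeping point is to double-check the constant: using one reference state $y$ for $\pi$ and a possibly different one for $\pi^*$ is fine since the two span bounds are proved independently, so the sum $\sp{h^*}+\sp{h^\pi}\le 2\sp r\,\overline D$ goes through without losing any factor. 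Everything else is a direct substitution into Theorem~\ref{theorem:main}.
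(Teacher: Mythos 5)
Your proposal is correct and follows essentially the same route as the paper: reduce to \cref{theorem:main}, lower-bound the gain gap by $\Delta_g$, and show $\sp{h^\pi}\le \sp{r}\,\overline{D}$ via the stopped-sum identity $h^\pi(x)-h^\pi(y)=\E_x^\pi[\sum_{t=0}^{\tau_y-1}(r^\pi(X_t)-g^\pi(X_t))]$. The only difference is presentational: the step you flag as the main obstacle (justifying that identity for the bias as defined) is exactly what the paper handles with a short optional-stopping argument on the martingale differences $(r^\pi(X_t)-g^\pi(X_t))-(h^\pi(X_t)-h^\pi(X_{t+1}))$ arising from the Poisson equation.
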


We later show that $\Delta_g$ and $\overline{D}$ are both computable in polynomial time.

\begin{proof}
    We only have to show that for all policy $\pi$, $\sp{h^\pi} \le \overline{D}\sp{r}$.
    This result is variation on \cite[Theorem~4]{bartlett_regal_2009} that we prove using a different technique. 
    Consider the Markov chain $\set{X_t : t \ge 1}$ induced by the iterations of $\pi$. 
    Let $x, y \in \Xc$ such that $\sp{h^\pi} = h^\pi(x) - h^\pi(y)$. 
    Because $r^\pi - g^\pi = (I - P^\pi) h^\pi$, the quantity
    $$
        \parens{r^\pi(X_t) - g^\pi(X_t)} - \parens{h^\pi(X_t) - h^\pi(X_{t+1})}
    $$
    is a martingale difference sequence. 
    Its differences are a.s.~bounded by $\sp{r} + \sp{h^\pi} < \infty$ and $\tau_y$ is an a.s.~finite stopping time. 
    Hence:
    \begin{align*}
        h(x) - h(y) = \E^\pi_x \brackets{h^\pi(X_0) - h^\pi(X_{\tau_y})}
        & = \E_x^\pi\brackets{\sum\nolimits_{t=0}^{\tau_y-1} (r^\pi(X_t) - g^\pi(X_t))} \\
        & \le \E_x^\pi[\tau_y] \sp{r} \\
        & \le \overline{D}\sp{r}. \qedhere
    \end{align*}
\end{proof}

\begin{remark}
    We see from the definition that $\overline{D}< \infty$ if and only if $M$ is ergodic. 
    If $M$ is not ergodic, then the bound of \cref{theorem:ergodic} is not informative. 
\end{remark}

\subsection{Computation of $\Delta_g$}

Define, for $(x, a)$ a state-action pair, 
\begin{equation}
    \Delta^*(x, a) := h^*(x) - \brackets{r(x, a) - g^*(x) + \dotProduct{p(x,a)}{h^*}} 
\end{equation}
the suboptimality gap of $(x, a)$, which is non-negative.
For $\pi \in \Pi$, $\mu^\pi_x$ denotes the (empirical) invariant measure of $\pi$ achieved by iterating $\pi$ starting from $x \in \Xc$. 

\begin{lemma}
    \label{lemma:bellman gap}
    For all policy $\pi$ and all $x \in \Sc$, $g^\pi(x) \le g^*(x) - \sum_{y \in \Xc} \mu^\pi_x(y) \Delta^*(y, \pi(y))$.
\end{lemma}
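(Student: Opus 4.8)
The plan is to express the gain $g^\pi(x)$ through the invariant measure $\mu^\pi_x$ and then substitute the definition of the suboptimality gap $\Delta^*$, so that an invariance cancellation eliminates the bias terms.

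First I would recall the standard identity $g^\pi(x) = \sum_{y \in \Xc} \mu^\pi_x(y)\, r^\pi(y)$: the gain started from $x$ is the average of the mean reward against the (Cesàro) invariant measure $\mu^\pi_x$ reached from $x$ under $P^\pi$. Rearranging the definition of $\Delta^*$ gives, for every $y$, $r^\pi(y) = r(y,\pi(y)) = h^*(y) + g^*(y) - \dotProduct{p(y,\pi(y))}{h^*} - \Delta^*(y,\pi(y))$. Plugging this into the identity and splitting the sum, the term $\sum_y \mu^\pi_x(y)\dotProduct{p(y,\pi(y))}{h^*}$ equals $(\mu^\pi_x P^\pi) h^* = \mu^\pi_x h^* = \sum_y \mu^\pi_x(y) h^*(y)$ by $P^\pi$-invariance of $\mu^\pi_x$; hence the two $h^*$-contributions cancel and we obtain the exact identity $g^\pi(x) = \sum_y \mu^\pi_x(y)\, g^*(y) - \sum_y \mu^\pi_x(y)\Delta^*(y,\pi(y))$.

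It then remains to show $\sum_y \mu^\pi_x(y)\, g^*(y) \le g^*(x)$. This follows from the average-reward Bellman optimality equations, which give $\dotProduct{p(x,a)}{g^*} \le g^*(x)$ for every admissible $a$, in particular $P^\pi g^* \le g^*$ pointwise; iterating the monotone operator $P^\pi$ yields $(P^\pi)^n g^* \le g^*$ for all $n$, and averaging over $n$ at $x$ gives $\sum_y \mu^\pi_x(y)\, g^*(y) = \lim_{N\to\infty} \frac1N\sum_{n=0}^{N-1} \bigl((P^\pi)^n g^*\bigr)(x) \le g^*(x)$. Combining with the identity above closes the argument; note that in the ergodic setting of this section this last step is immediate, since $g^*$ is then a constant vector and $\mu^\pi_x$ does not even depend on $x$.

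The only point requiring care is the bookkeeping around $\mu^\pi_x$: one should read it as the Cesàro limit $\lim_N \frac1N \sum_{n=0}^{N-1} e_x (P^\pi)^n$, which always exists for a finite chain, is $P^\pi$-invariant, and is exactly what makes both $g^\pi(x) = \dotProduct{\mu^\pi_x}{r^\pi}$ and $\dotProduct{\mu^\pi_x}{g^*} \le g^*(x)$ valid; once these two facts are in place, the lemma is a one-line substitution plus the cancellation described above.
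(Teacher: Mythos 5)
Your proof is correct and is essentially the paper's argument in dual form: both rearrange the definition of $\Delta^*$ into $r^\pi = g^* + (I - P^\pi)h^* - \Delta^\pi$, both rely on $P^\pi g^* \le g^*$ (from the first average-reward optimality equation), and both pass to the Ces\`aro limit -- the paper by summing $(P^\pi)^t r^\pi$ over $t$ and telescoping the $h^*$ term, you by pairing the same identity directly with $\mu^\pi_x$ so that $P^\pi$-invariance performs the same cancellation. There is no gap; your reading of $\mu^\pi_x$ as the Ces\`aro limit $\lim_N \frac{1}{N}\sum_{n=0}^{N-1} e_x (P^\pi)^n$ is exactly the limiting matrix $\lim_T \frac{1}{T}\sum_{t=0}^{T-1}(P^\pi)^t$ that appears at the end of the paper's proof.
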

\begin{proof}
    This result can either be established algebraically or using the martingale technique used in \cref{theorem:ergodic}.
    We go for the algebraic proof here. 
    Denote $\Delta^\pi(x) := \Delta^*(x, \pi(x))$. 
    By definition of $\Delta^*$, we have $r^\pi = g^* + (I - P^\pi) h^* + \Delta^\pi$. 
    Multiplying by $(P^\pi)^t$ and summing over $t$, we obtain:
    \begin{align*}
        \sum\nolimits_{t=0}^{T-1} (P^\pi)^t r^\pi 
        & = 
        \sum\nolimits_{t=0}^{T-1} (P^\pi)^t g^*
        + \parens{I - (P^\pi)^T}h^*
        + \sum\nolimits_{t=0}^{T-1} (P^\pi)^t \Delta^\pi
        \\
        & \le 
        T g^*
        + \parens{I - (P^\pi)^T}h^*
        + \sum\nolimits_{t=0}^{T-1} (P^\pi)^t \Delta^\pi.
    \end{align*}
    Dividing by $T$ and making $T$ go to infinity, we obtain:
    \begin{equation}
        \notag
        g^\pi \le g^* + \parens{\lim_{T \to \infty} \frac 1T \sum\nolimits_{t=0}^{T-1} (P^\pi)^t} \Delta^\pi
    \end{equation}
    whose $x$-th line readily provides the result.
\end{proof}

Following \cref{lemma:bellman gap}, we deduce that for any policy $\pi$ such that $g^\pi(x) < g^*(x)$, there must be $y \in \Xc$ such that $\mu_x^\pi(y) \Delta^*(y, \pi(y)) > 0$. 
More precisely, since  $\mu_x^\pi(y) >0$ for all $\pi$ and all $y$ in the ergodic case,   a policy is gain-suboptimal  if and only if it uses a suboptimal action at some point, i.e., an action such that $\Delta^*(x, a) > 0$. 
Because, given a MDP $M$, the computation of $g^*(M)$ is polynomial time, we deduce that $\Delta_g$ is also  computed in polynomial time with the following procedure:

\begin{algorithm}[h]
    \caption{Computation of $\Delta_g$ for a MDP $M$}
    \begin{algorithmic}[1]
        \State {Construct $\set{M_{xa} : (x,a) \in \Sc \times \Ac}$ where $M_{xa}$ is the copy of $M$ whose only available action from $x$ is $a$; }
        \State {Compute $\set{g^*(M_{xa}) : (x, a) \in \Sc \times \Ac}$ and $g^*(M)$; }
        \State{\Return $\min\set{g^*(M) - g^*(M_{xa}) : g^*(M_{xa}) < g^*(M)}$.}
    \end{algorithmic}
\end{algorithm}


\subsection{Computation of $\overline{D}$}

The computation of $\overline{D}$ follows the same idea than $\Delta_g$. 
Denote $M_{y}$ the copy of $M$ where (1) $y$ is zero-reward absorbing state and (2) all rewards, except from $y$, are set to $1$. 
Because $y$ is recurrent under every policy $\pi$ (on $M$), the iterates of $\pi$ are eventually stationary to $y$ on $M_y$.
In particular, $g^\pi(M_y) = 0$ for all $\pi$ and $y$. 
We deduce that $h^*(x; M_y) = \max_\pi \E^{\pi, M}_x[\tau_y]$, hence:

$$
    \max_x h^*(x; M_y) = \max_\pi \max_x \E_x^{\pi, M}[\tau_y].
$$

But is the computation of $h^*(M_y)$ polytime?
In general, Bellman's equations are not enough to compute $h^*$.
We show that they are sufficient for $M_y$.

\begin{lemma}
    Write $M_y = \tuple{\Sc, \Ac, p_{M_y}, q_{M_y}}$. 
    Let $\pi$ any policy that satisfies the Bellman equation:
    $$
        \forall x \in \Xc, 
        \quad
        g^\pi(x; M_y) + h^\pi(x; M_y) 
        = \max_{a \in \Ac(x)} \set{ r(x, a; M_y) + \dotProduct{p_{M_y}(x, a)}{h^\pi(M_y)}}.
    $$
    Then $\pi$ is bias-optimal on $M_y$. 
\end{lemma}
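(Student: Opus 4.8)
The plan is to use the very particular structure of $M_y$. Since $y$ is a zero-reward absorbing state that is recurrent under every policy, every policy reaches $y$ almost surely, so $g^\pi(\cdot\,; M_y) \equiv 0$ for all $\pi$ (as already observed) and $h^\pi(y; M_y) = 0$ (from $y$ the process stays at $y$ and collects no reward). Plugging $g^\pi \equiv 0$ into the hypothesis shows that $h^\pi(\cdot\,; M_y)$ solves
$$
    h^\pi(x; M_y) = \max_{a \in \Ac(x)} \set{ r(x, a; M_y) + \dotProduct{p_{M_y}(x, a)}{h^\pi(M_y)}}, \qquad x \in \Xc .
$$
Because all gains on $M_y$ vanish, every policy is gain-optimal there, so $\pi$ is bias-optimal on $M_y$ as soon as $h^\pi(\cdot\,; M_y) \ge h^{\pi'}(\cdot\,; M_y)$ pointwise for every policy $\pi'$. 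The whole proof therefore reduces to proving this domination.

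To prove it, I would fix a policy $\pi'$ and suppress $M_y$ from the notation. Bounding the maximum above from below by the term attached to the action $\pi'(x)$ gives the vector inequality $h^\pi \ge r^{\pi'} + P^{\pi'} h^\pi$. Applying the monotone operator $P^{\pi'}$ and resubstituting repeatedly yields, for every $T \ge 1$,
$$
    h^\pi \ \ge\ \sum\nolimits_{t=0}^{T-1} (P^{\pi'})^t r^{\pi'} \ +\ (P^{\pi'})^T h^\pi .
$$
Now let $T \to \infty$. The $x$-th coordinate of the first term equals $\E^{\pi'}_x[\sum_{t=0}^{T-1} r(X_t; M_y)]$, which increases (rewards on $M_y$ are nonnegative) to $\E^{\pi'}_x[\tau_y] = h^{\pi'}(x)$ by monotone convergence; and $(P^{\pi'})^T h^\pi$ evaluated at $x$ equals $\E^{\pi'}_x[h^\pi(X_T)] \to h^\pi(y) = 0$ since $X_T \to y$ almost surely and $h^\pi$ is bounded on the finite state space. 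Hence $h^\pi \ge h^{\pi'}$, as required. One may equivalently run the optional-stopping argument from the proof of \cref{theorem:ergodic}: the hypothesis makes $(h^\pi(X_t))_t$ into a supermartingale for the chain induced by $\pi'$ whose increments have conditional mean dominated by $-r^{\pi'}(X_t)$, and stopping at the a.s.\ finite time $\tau_y$ — at which $h^\pi$ vanishes — gives exactly $h^\pi(x) \ge \E^{\pi'}_x[\tau_y] = h^{\pi'}(x)$.

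The only delicate point is the passage to the limit $(P^{\pi'})^T h^\pi(x) \to 0$, i.e.\ that $X_T$ settles at $y$; this is exactly where recurrence of $y$ under every policy on $M_y$ is used, and in the ergodic regime it moreover guarantees $\E^{\pi'}_x[\tau_y] < \infty$, so that the monotone limit $h^{\pi'}(x)$ is finite and the integrability needed for the optional-stopping variant holds. Everything else is routine manipulation of the Bellman/Poisson equations.
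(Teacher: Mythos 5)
Your proof is correct and follows essentially the same route as the paper: unroll the Bellman inequality under a fixed comparison policy to get $h^\pi \ge \sum_{t=0}^{T-1}(P^{\pi'})^t r^{\pi'} + (P^{\pi'})^T h^\pi$, then let $T \to \infty$ using absorption at $y$ and $h^\pi(y)=0$. The only (harmless) difference is that you compare against an arbitrary policy $\pi'$ and identify the limit of the partial sums with $h^{\pi'}(x) = \E^{\pi'}_x[\tau_y]$, whereas the paper compares only against a bias-optimal policy $\pi^*$; both yield the required domination.
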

\begin{proof}
    This result is a special case of a much more general result that says that, if all policies have the same recurrent states, Bellman equations automatically guarantee bias-optimality.
    We provide an ad-hoc proof for the special case of $M_y$ for self-containedness.
    For short, denote $g'^\pi$, $h'^\pi$, $r'^\pi$ and $P'^\pi$ the quantities related to $\pi$ on $M_y$. 
    We know that $g'^\pi = 0$ automatically (all policies are gain-optimal on $M_y$). 
    Let $\pi^*$ bias-optimal on $M_y$.
    By applying Bellman's equation iteratively, we get:
    \begin{align*}
        h'^\pi 
        & \ge r'^{\pi^*} + P'^{\pi^*} h'^\pi \\
        & \ge r'^{\pi^*} + P'^{\pi^*} \parens{r'^{\pi^*} + P'^{\pi^*} h'^\pi} \\
        & ~\!~\vdots \\
        & \ge \sum\nolimits_{t=0}^{T-1} P'^{\pi^*} r'^{\pi^*} + (P'^{\pi^*})^T h'^\pi.
    \end{align*}
    We know that the only recurrent state of $\pi^*$ is $y$, so when $T \to \infty$, $(P'^{\pi^*})^T h'^\pi \to h'^\pi(y) = 0$.
    So overall, when $T \to \infty$, we get $h'^\pi \ge h^*$. 
    So $h'^\pi = h^*$.
\end{proof}

In the end, $\overline{D}$ is computed in polynomial type as follows.

\begin{algorithm}[h]
    \caption{Computation of $\overline{D}$ for ergodic MDPs.}
    \begin{algorithmic}[1]
        \State{Construct $\set{M_y: y \in \Sc}$ where $M_y$ is the $y$-absorbing copy of $M$ with $0$ reward on $y$ and $1$ reward everywhere else;}
        \State{For each $y$, compute $\pi_y$ a policy satisfying the Bellman's equations on $M_y$;}
        \State{\Return $\max_y \max_x h^{\pi_y}(x; M_y)$.}
    \end{algorithmic}
\end{algorithm}

\section{Proof of \cref{theorem:main}}

\begin{lemma}
    \label{lemma:finite horizon}
    Every policy $\pi$ satisfies:
    $
        g^\pi(x) - \frac 1T \sp{h^\pi}
        \le 
        \frac 1T {\rm J}^\pi_T(x)
        \le 
        g^\pi(x) + \frac 1T \sp{h^\pi}.
    $
\end{lemma}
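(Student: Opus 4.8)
The plan is to use the fundamental identity relating the finite-horizon score to the gain and bias. Recall that for the Markov reward process induced by $\pi$, the bias $h^\pi$ satisfies the Poisson equation $r^\pi - g^\pi = (I - P^\pi)h^\pi$ (at least in the Cesàro sense; for a deterministic stationary policy on a finite MDP this holds with the usual conventions). Iterating this identity, I would write $(P^\pi)^t r^\pi = (P^\pi)^t g^\pi + (P^\pi)^t h^\pi - (P^\pi)^{t+1} h^\pi$, and then sum over $t = 0, \dots, T-1$. The sum telescopes in the bias terms, giving
\[
    \sum_{t=0}^{T-1} (P^\pi)^t r^\pi = \sum_{t=0}^{T-1} (P^\pi)^t g^\pi + h^\pi - (P^\pi)^T h^\pi.
\]
Evaluating the $x$-th coordinate, the left side is exactly ${\rm J}^\pi_T(x) = \E_x^\pi[\sum_{t=0}^{T-1} r(Z_t)]$, and the first term on the right is $\sum_{t=0}^{T-1} \E_x^\pi[g^\pi(X_t)]$.

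The second step is to handle the term $\sum_{t=0}^{T-1} \E_x^\pi[g^\pi(X_t)]$. Since $g^\pi$ is constant on each recurrent class and $P^\pi g^\pi = g^\pi$ (the gain is harmonic for $P^\pi$), we actually have $\E_x^\pi[g^\pi(X_t)] = g^\pi(x)$ for every $t$, so this term equals $T g^\pi(x)$ exactly. Thus
\[
    {\rm J}^\pi_T(x) = T g^\pi(x) + h^\pi(x) - \E_x^\pi[h^\pi(X_T)].
\]
Dividing by $T$, it remains to bound the residual $\frac{1}{T}\bigl(h^\pi(x) - \E_x^\pi[h^\pi(X_T)]\bigr)$. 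Both $h^\pi(x)$ and $\E_x^\pi[h^\pi(X_T)]$ lie in the interval $[\min_z h^\pi(z), \max_z h^\pi(z)]$, so their difference is bounded in absolute value by $\sp{h^\pi} = \max_z h^\pi(z) - \min_z h^\pi(z)$. This immediately yields $|\frac{1}{T}{\rm J}^\pi_T(x) - g^\pi(x)| \le \frac{1}{T}\sp{h^\pi}$, which is the claimed two-sided bound.

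The only delicate point is the justification of $P^\pi g^\pi = g^\pi$ and the Poisson equation when $P^\pi$ is periodic or has several recurrent classes, in which case $h^\pi$ is defined via a Cesàro limit. I would either invoke the standard theory (e.g.\ \cite{puterman_markov_1994}, the Laurent-series / deviation-matrix identities) that guarantees the pair $(g^\pi, h^\pi)$ solves $g^\pi = P^\pi g^\pi$ and $g^\pi + h^\pi = r^\pi + P^\pi h^\pi$, or note that it suffices to prove the inequality for aperiodic unichain restrictions and pass to the general case by the Cesàro averaging built into the definition of $h^\pi$. I expect this bookkeeping around the definition of the bias to be the main (and essentially only) obstacle; the algebra of the telescoping sum is routine.
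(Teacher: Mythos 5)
Your proposal is correct and follows essentially the same route as the paper: the Poisson equation $r^\pi = g^\pi + (I - P^\pi)h^\pi$, the telescoping sum over $t = 0,\dots,T-1$ together with $P^\pi g^\pi = g^\pi$ to get ${\rm J}_T^\pi = T g^\pi + (I - (P^\pi)^T)h^\pi$, and a span bound on the residual term. The only difference is cosmetic (you work coordinate-wise with $\E_x^\pi[h^\pi(X_T)]$ where the paper argues with the stochastic matrix $(P^\pi)^T$), and your closing remark about justifying the Poisson equation in the general multichain/periodic case is the standard citation the paper implicitly relies on as well.
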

\begin{proof}[Proof of \cref{lemma:finite horizon}]
    Bias, gain and reward vectors are linked by the following Poisson equation: $r^\pi = g^\pi + (I - P^\pi)h^\pi$. 
    Multiplying by $(P^\pi)^t$ and summing up, we obtain:
    \begin{align*}
        {\rm J}_T^\pi := \sum_{t=0}^{T-1} (P^\pi)^t r^\pi
        & = \sum_{t=0}^{T-1} (P^\pi)^t g^\pi + \sum_{t=0}^{T-1} \parens{(P^\pi)^t - (P^\pi)^{t+1}} h^\pi \\
        & = T g^\pi + (I - (P^\pi)^T) h^\pi
    \end{align*}
    where the last equality is obtained using $P^\pi g^\pi = g^\pi$. 
    Now, because $I - (P^{\pi})^T$ is a difference of line-stochastic matrices, all entries of $(I - (P^\pi))^T h^\pi$ are upper bounded by $\sp{h^\pi}$ in absolute value. 
    If $e$  denotes the vector whose entries are all $1$s, we thus obtain
    $$
        T g^\pi - \sp{h^\pi} e 
        \le {\rm J}_T^\pi
        \le T g^\pi + \sp{h^\pi} e
    $$
    which is the claimed result. 
\end{proof}

\begin{lemma}
    \label{lemma:discounted}
    For all policy $\pi$ and discount factor $\beta \in [0, 1)$, 
    $
        \abs{{\rm V}_\beta^\pi(x) - \frac{g^\pi(x)}{1-\beta}}
        \le 
        \sp{h^\pi}.
    $
\end{lemma}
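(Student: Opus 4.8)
The plan is to mimic the proof of \cref{lemma:finite horizon}, but applying the discount weights $\beta^t$ directly to the Poisson equation instead of summing unweighted. Starting from $r^\pi = g^\pi + (I - P^\pi)h^\pi$, I would multiply by $\beta^t (P^\pi)^t$ and sum over $t \ge 0$; since $\beta < 1$ and all matrices involved are stochastic, every series converges absolutely. This gives
\begin{equation*}
    {\rm V}_\beta^\pi = \sum_{t=0}^\infty \beta^t (P^\pi)^t r^\pi
    = \sum_{t=0}^\infty \beta^t (P^\pi)^t g^\pi + \sum_{t=0}^\infty \beta^t \parens{(P^\pi)^t - (P^\pi)^{t+1}} h^\pi.
\end{equation*}

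The first sum simplifies using $P^\pi g^\pi = g^\pi$, so $\sum_t \beta^t (P^\pi)^t g^\pi = \sum_t \beta^t g^\pi = \frac{g^\pi}{1-\beta}$. For the second sum, I would reindex the $(P^\pi)^{t+1}$ term (shifting $t \mapsto t-1$) to rewrite it as a telescoping-type expression: $\sum_{t=0}^\infty \beta^t (P^\pi)^t h^\pi - \sum_{t=1}^\infty \beta^{t-1} (P^\pi)^t h^\pi = h^\pi + (1 - \beta^{-1})\sum_{t=1}^\infty \beta^t (P^\pi)^t h^\pi$, or more cleanly group it as $(I - P^\pi)\sum_{t=0}^\infty \beta^t (P^\pi)^t h^\pi$ after noting $\sum_t \beta^t((P^\pi)^t - (P^\pi)^{t+1}) = (I - P^\pi)\sum_t \beta^t (P^\pi)^t$. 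Then each row of $(I - P^\pi) w$ is a difference of two averages of the entries of $w$ (with $w := \sum_t \beta^t (P^\pi)^t h^\pi$), hence bounded in absolute value by $\sp{w}$.

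The remaining point is to bound $\sp{w}$ where $w = \sum_{t \ge 0} \beta^t (P^\pi)^t h^\pi$. I would argue that $w = \frac{1}{1-\beta}\sum_t (1-\beta)\beta^t (P^\pi)^t h^\pi$ is $\frac{1}{1-\beta}$ times a convex combination (the weights $(1-\beta)\beta^t$ sum to $1$) of the vectors $(P^\pi)^t h^\pi$, each of which has all entries in $[\min_y h^\pi(y), \max_y h^\pi(y)]$ since $(P^\pi)^t$ is stochastic. Therefore every entry of $w$ lies in $\frac{1}{1-\beta}[\min h^\pi, \max h^\pi]$, so $\sp{(I-P^\pi)w} \le \sp{w}$ is not quite what I want directly — instead I should observe that $\sp{(I - P^\pi) w}$: writing $(I-P^\pi)w$ row $x$ as $w(x) - \sum_y P^\pi(x,y) w(y)$, its value lies in $[\min w - \max w, \max w - \min w]$, so its span over $x$ is at most $2\sp{w}$. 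That would give a factor $2$, which is too weak. The fix is to factor differently: write $\sum_t \beta^t((P^\pi)^t - (P^\pi)^{t+1}) h^\pi = (I - \beta P^\pi)\sum_t \beta^t (P^\pi)^t h^\pi - \beta \cdot 0 \cdot(\dots)$ — hmm. Better: use $\sum_{t\ge 0}\beta^t((P^\pi)^t - (P^\pi)^{t+1}) = I - (1-\beta)\sum_{t\ge 1}\beta^{t-1}(P^\pi)^t = \dots$; cleanest is the reindexed form $h^\pi - (1-\beta)\sum_{t\ge 0}\beta^t (P^\pi)^{t+1} h^\pi$, i.e. $h^\pi$ minus a convex combination of vectors $(P^\pi)^{t+1}h^\pi$ whose entries lie in $[\min h^\pi, \max h^\pi]$. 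Thus each entry of the second sum equals $h^\pi(x) - c_x$ with $c_x \in [\min h^\pi, \max h^\pi]$, so it lies in $[-\sp{h^\pi}, \sp{h^\pi}]$, giving $\abs{{\rm V}_\beta^\pi(x) - \frac{g^\pi(x)}{1-\beta}} \le \sp{h^\pi}$ as desired.

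The main obstacle is precisely this last bookkeeping: naively pulling $(I - P^\pi)$ out of the series and bounding its action by the span of $w \sim \frac{1}{1-\beta}h^\pi$-scale vector introduces a spurious $\frac{1}{1-\beta}$ blow-up, so one must keep the telescoping structure intact and recognize the second term as $h^\pi$ minus a stochastic-average (convex combination) of translates of $h^\pi$, which is automatically within $\sp{h^\pi}$ of zero entrywise. Everything else — convergence of the series, $P^\pi g^\pi = g^\pi$, the geometric sum — is routine. I would also remark that \cref{lemma:discounted} combined with \cref{lemma:finite horizon} is the ``folklore identity'' alluded to in the introduction and feeds directly into comparing ${\rm V}_\beta^{\pi}$ for a discounted-optimal $\pi$ against ${\rm V}_\beta^{\pi^*}$ for a gain-optimal $\pi^*$, which yields \cref{theorem:main}.
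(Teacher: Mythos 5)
Your argument is correct, but it takes a genuinely different route from the paper. The paper never touches the discounted telescoping directly: it writes $(1-\beta)^{-2}$ times the normalized discounted reward as the Cauchy product $\parens{\sum_t \beta^t}\parens{\sum_t r_t\beta^t} = \sum_t {\rm J}_{t+1}^\pi(x)\beta^t$, i.e.\ it expresses ${\rm V}_\beta^\pi$ as an Abel-type average of the finite-horizon returns, and then invokes \cref{lemma:finite horizon} termwise through the weights $(1-\beta)^2(t+1)\beta^t$, which sum to one. You instead go back to the Poisson equation $r^\pi = g^\pi + (I-P^\pi)h^\pi$, multiply by $\beta^t(P^\pi)^t$, sum, and handle the residual by the reindexing
$\sum_{t\ge 0}\beta^t\parens{(P^\pi)^t-(P^\pi)^{t+1}}h^\pi = h^\pi - (1-\beta)\sum_{t\ge 0}\beta^t(P^\pi)^{t+1}h^\pi$,
recognizing the subtracted term as a convex combination of vectors whose entries lie in $[\min_y h^\pi(y),\max_y h^\pi(y)]$; this gives the entrywise bound $\sp{h^\pi}$ with no $(1-\beta)^{-1}$ loss. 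Your self-correction is exactly the right one: pulling $(I-P^\pi)$ out of the series and bounding crudely would indeed blow up, and keeping the telescoping/convex-combination structure is what saves the constant. In effect both proofs rest on the same Poisson equation, but yours bypasses \cref{lemma:finite horizon} and the Cauchy-product identity entirely, making \cref{lemma:discounted} self-contained, whereas the paper's route buys a clean reuse of the finite-horizon lemma and makes the Ces\`aro-to-Abel averaging structure explicit. Either version feeds into the proof of \cref{theorem:main} identically.
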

\begin{proof}[Proof of \cref{lemma:discounted}]
    Denote $r_t := \E_x^\pi[r(Z_t)]$ the $t$-th expected reward.
    Let $F_\beta^\pi(x):= (1 - \beta) \sum_{t=0}^\infty \beta^{t} r_t$ the normalized $\beta$-discounted reward of $\pi$ starting from $x$. 
    Then for $\abs{\beta} < 1$, 
    \begin{align*}
        (1 - \beta)^{-2} F_\beta^\pi(x)
        & = (1 - \beta)^{-1} \sum_{t=0}^\infty r_t \beta^{t} 
        \\
        & = \parens{\sum_{t=0}^\infty \beta^{t}} \parens{\sum_{t=0}^\infty r_t \beta^{t}}
        \\ 
        & = \sum_{t=0}^\infty {\rm J}_{t+1}^\pi(x) \beta^{t}.
    \end{align*}
    So $F_\beta^\pi(x) = (1-\beta)^2 \sum_{t=0}^\infty {\rm J}_{t+1}^\pi (x)\beta^{t}$.
    Using $1 = (1 - \beta)^2 \sum_{t=0}^\infty  (t+1) \beta^{t}$,
    we also have  $g^\pi = (1 - \beta)^2 \sum_{t=0}^\infty g^\pi (t+1) \beta^{t}$, and 
    \begin{align*}
        \abs{
            F_\beta^\pi(x) - g^\pi(x)
        }
        & \le 
        (1 - \beta)^2
        \sum_{t=0}^\infty
          \abs{\frac{1}{t+1} {\rm J}_{t+1}^\pi(x) - g^\pi(x)} (t+1)\beta^{t}\\
      & \leq (1 - \beta)^2 \sum_{t=0}^\infty  \sp{h^\pi} \beta^{t}\\
       &= (1-\beta) \sp{h^\pi},
    \end{align*}
    where the second inequality is obtained by applying \cref{lemma:finite horizon} for $J_{t+1}^\pi$.
\end{proof}

\begin{proof}[Proof of \cref{theorem:main}]
    Let $\beta^*$ the right-hand side of \eqref{equation:main} and let $\beta > \beta^*$.
    Let $\pi^*$ a bias-optimal policy and $\pi$ such that $g^\pi(x) < g^*(x)$ for some $x \in \Xc$. 
    From \cref{lemma:discounted} follows that
    $$
        {\rm V}_\beta^{\pi^*}(x)
        \ge
        {\rm V}_\beta^{\pi}(x)
        +
        \frac{g^*(x) - g^\pi(x)}{1 - \beta} - \sp{h^*} - \sp{h^\pi}.
    $$
    A sufficient condition for ${\rm V}_\beta^{\pi^*}(x) > {\rm V}_{\beta}^\pi(x)$ is thus
    $$
        \frac{g^*(x) - g^\pi(x)}{1 - \beta} - \sp{h^*} - \sp{h^\pi}
        > 0.
    $$
    Solving the above inequality in $\beta$, this sufficient condition is equivalent to:
    $$
        \beta > 1 - \frac{g^*(x) - g^\pi(x)}{\sp{h^*} + \sp{h^\pi}}.
    $$
    This holds by choice of $\beta^*$.
    So ${\rm V}_\beta^{\pi^*}(x) > {\rm V}_\beta^{\pi}(x)$ and $\pi$ is not $\beta$-discounted optimal. 
\end{proof}

\bibliographystyle{apalike}
\bibliography{bibliography}

\end{document}